\theoremstyle{plain}
\newtheorem{thm}{\protect\theoremname}
\definecolor{myurlcolor}{rgb}{0,0,0.9}
\newcommand{\EE}{\mathcal{E}}
\newcommand{\ran}{\rangle}
\newcommand{\lan}{\langle}
\providecommand{\theoremname}{Theorem}
\providecommand{\theoremname}{Theorem}
\providecommand{\theoremname}{Theorem}
\providecommand{\theoremname}{Theorem}
\providecommand{\theoremname}{Theorem}
\providecommand{\theoremname}{Theorem}
\begin{document}
\title{Entanglement negativity as a universal non-Markovianity witness}
\author{Jan Ko\l ody\'{n}ski}
\author{Swapan Rana}
\author{Alexander Streltsov}
\email{a.streltsov@cent.uw.edu.pl}

\affiliation{Centre for Quantum Optical Technologies, Centre of New Technologies,
University of Warsaw, Banacha 2c, 02-097 Warsaw, Poland}
\begin{abstract}
In order to engineer an open quantum system and its evolution, it
is essential to identify and control the memory effects. These are
formally attributed to the non-Markovianity of dynamics that manifests
itself by the evolution being indivisible in time, a property which
can be witnessed by a non-monotonic behavior of contractive functions
or correlation measures. We show that by monitoring directly the entanglement
behavior of a system in a tripartite setting it is possible to witness
all invertible non-Markovian dynamics, as well as all\emph{ }(also
non-invertible) qubit evolutions. This is achieved by using negativity,
a computable measure of entanglement, which in the usual bipartite
setting is not\emph{ }a universal non-Markovianity witness. We emphasize
further the importance of multipartite states by showing that non-Markovianity
cannot be faithfully witnessed by any contractive function of single
qubits. We support our statements by an explicit example of eternally
non-Markovian qubit dynamics, for which negativity can witness non-Markovianity
at arbitrary time scales. 
\end{abstract}
\maketitle
\textbf{\emph{Introduction.}} Describing effective dynamics of any
realistic quantum system that interacts with its environment inevitably
requires the theory of open quantum systems~\citep{Breuer,Lidar.A.2019}.
In recent years, a growing interest has been devoted to the determination
of dynamical properties that can be pinpointed when studying solely
the system evolution, in particular, distinguishing memory-less---\emph{Markovian---}dynamics
from ones that exhibit memory effects. Various ways have been proposed
on how to define the concept of memory or, more precisely, non-Markovianity
at the level of quantum evolutions, see~\citep{Rivas+2.RPP.2014,Breuer+3.RMP.2016,Vega+Alonso.RMP.2017,Li+2.PR.2018}
for detailed reviews on the topic. Although recently questioned \citep{Pollock2018,milz_cp_2019},
the most commonly adopted definition \citep{Rivas+2.PRL.2010,Chruscinski2014,bae_operational_2016}
is the natural generalisation of the Chapman-Kolgomorov equation,
which assures the time-divisibility of stochastic maps in case of
classical Markovian processes~\citep{Vacchini_2011}. In particular,
focusing on the family of quantum operations, i.e., completely positive
(CP) trace-preserving (TP) maps $\Lambda_{t}$ that represent the
system evolution from the initial time $t=0$ to each $t>0$, one
may verify their \emph{CP-divisibility }\citep{Wolf+Cirac.CMP.2008}
by inspecting whether at any intermediate time $0\le s\le t$ each
of them could be decomposed (concatenated) as 
\begin{equation}
\Lambda_{t}=V_{t,s}\circ\Lambda_{s}\label{cond:CP-div}
\end{equation}
with a valid dynamical (CPTP) map $V_{t,s}$.

Nevertheless, the above criterion is often weakened in order to construct
witnesses of non-Markovianity\emph{ }that despite not always being
able to certify the non-CP character of $V_{t,s}$ can have an operational
motivation. The most commonly used notion is the temporal behaviour
of distinguishability\emph{, }as measured by the trace distance $||\rho-\sigma||_{1}/2$
with the trace norm $||M||_{1}=\mathrm{Tr}\sqrt{M^{\dagger}M}$, between
a pair of evolving quantum states $\rho$ and $\sigma$ \citep{Breuer+2.PRL.2009}.
Its increase at a given time instance is then interpreted as a manifestation
of information backflow from the environment to the system \citep{Chruscinski+2.PRA.2011,buscemi_equivalence_2016}.

However, when dealing with invertible~\citep{Bylicka+2.PRL.2017}
or image non-increasing~\citep{Chruscinski+2.PRL.2018} dynamical
maps $\Lambda_{t}$, which describe almost all quantum evolutions,
the CP-divisibility criterion can be restated in terms of the information
backflow. By allowing for an ancilla of system dimension $d$, the
condition~(\ref{cond:CP-div}) becomes equivalent to the statement
\citep{Chruscinski+2.PRA.2011}:
\begin{equation}
\frac{\mathrm{d}}{\mathrm{d}t}\big\|\,\Lambda_{t}\otimes\openone_{d}[p_{1}\,\rho_{1}-p_{2}\,\rho_{2}]\,\big\|_{1}\leq0,\label{eq:CP-div_by_tr-norm}
\end{equation}
which must now be valid for all $t\ge0$, all bipartite system-ancilla
initial states $\rho_{1},\,\rho_{2}$ and all probabilities $p_{1}+p_{2}=1$
\footnote{In general, it is enough to consider only the right derivative in
Eq.~\eqref{eq:CP-div_by_tr-norm}.}. In this Letter, we will consider evolutions for which this equivalence
holds, what in fact includes also all qubit dynamics~\citep{Chakraborty+Chruscinski.A.2019}.
That is why, from now on we will refer to non-Markovianity as defined
by the violation of CP-divisibility.

Still, it has remained unknown whether such notion of non-Markovianity
can be faithfully verified by considering solely the evolution of
correlations, in particular, dynamics of the entanglement\emph{ }between
the system and some ancillae \citep{Rivas+2.PRL.2010}. This would
allow to certify non-Markovianity by preparing the system and ancillae
in an initial correlated state, in order to observe an increase of
some\emph{ }entanglement measure~\citep{VedralPhysRevLett.78.2275,Horodecki2009}
at a later time $t^{*}>0$, without need to consider ensembles of
initial states and distinguishability tasks~\citep{buscemi_equivalence_2016}.
Previous results suggest that traditional correlation quantifiers,
such as entanglement measures~\citep{Santis+4.PRA.2019,NetoPhysRevA.94.032105}
and mutual information~\citep{Luo+2.PRA.2012} fail to witness all
non-Markovian evolutions, while a recently proposed correlation measure~\citep{Santis+4.PRA.2019}
can witness ``almost all'' of them.

In this Letter, we show that negativity, a well known computable quantifier
of bipartite entanglement~\citep{ZyczkowskiPhysRevA.58.883,Vidal+Werner.PRA.2002},
can witness all non-Markovian qubit dynamics $\Lambda_{t}$ and all
invertible evolutions of arbitrary dimension. After discussing the
limitations in witnessing non-Markovianity in single-qubit systems,
we present the general construction for negativity as a universal
non-Markovianity witness. We provide an explicit example, witnessing
violations of CP-divisibility for eternally non-Markovian qubit evolutions~\citep{Hall2014}
at arbitrary time scales.

\medskip{}

\textbf{\emph{Witnessing non-Markovianity with contractive functions.}}
A general witness of non-Markovianity can be built from any contractive
function $f(\rho,\sigma)$ of two quantum states $\rho$ and $\sigma$,
where contractivity means that 
\begin{equation}
f(\Lambda[\rho],\Lambda[\sigma])\leq f(\rho,\sigma)
\end{equation}
for any quantum operation $\Lambda$. Important examples for contractive
functions are the trace distance $||\rho-\sigma||_{1}/2$, infidelity
$1-F(\rho,\sigma)$ with fidelity $F(\rho,\sigma)=||\sqrt{\rho}\sqrt{\sigma}||_{1}$,
and the quantum relative entropy $S(\rho||\sigma)=\mathrm{Tr}[\rho\log_{2}\rho]-\mathrm{Tr}[\rho\log_{2}\sigma]$.
Recently, a family of contractive functions, named quantum relative
Rényi entropy, has been introduced as~\citep{Muller-Lennert2013,Wilde2014}
\begin{equation}
D_{\alpha}^{\mathrm{q}}(\rho||\sigma)=\frac{1}{\alpha-1}\log_{2}\mathrm{Tr}\left[\left(\sigma^{\frac{1-\alpha}{2\alpha}}\rho\sigma^{\frac{1-\alpha}{2\alpha}}\right)^{\alpha}\right],\label{eq:QRR}
\end{equation}
with $\alpha\geq1/2$. In the limit $\alpha\rightarrow1$ the function
$D_{\alpha}^{\mathrm{q}}(\rho||\sigma)$ coincides with the relative
entropy $S(\rho||\sigma)$, and for $\alpha=1/2$ we obtain $D_{1/2}^{\mathrm{q}}(\rho||\sigma)=-2\log_{2}F(\rho,\sigma)$.

Noting that any contractive function is monotonically decreasing with
$t$ for any Markovian evolution, an increase of $f$ for some $t>0$
serves as a witness of non-Markovianity. It is now reasonable to ask
whether any non-Markovian evolution can be witnessed by some suitably
chosen contractive function. As we show in the Theorem~\ref{thm:Functions}
below, the answer to this question is negative for single-qubit systems.
An important type of evolutions in this context is given by Eq.~(\ref{cond:CP-div}),
where $V_{t,s}$ admits the decomposition 
\begin{equation}
V_{t,s}[\rho]=p\EE_{1}[\rho]+(1-p)\EE_{2}[\rho^{T}]\label{eq:V_decomp}
\end{equation}
with probabilities $p$ and CPTP-maps $\EE_{1}$ and $\EE_{2}$ which
can further depend on $t$ and $s$ with $s\le t$. Maps $V_{t,s}$
admitting Eq.~(\ref{eq:V_decomp}) are a subclass of positive maps
(P-maps) which are not necessarily CP, see Supplemental Material for
more details. Evolutions admitting decompositions with $V_{t,s}$
being P are generally called \emph{P-divisible}. An example of a non-Markovian
evolution admitting this form is presented below in Eq.~(\ref{eq:Pauli_map}).
We are now ready to present the first main result of this Letter. 
\begin{thm}
\label{thm:Functions}For any non-Markovian evolution $\Lambda_{t}=V_{t,s}\circ\Lambda_{s}$
with $V_{t,s}$ fulfilling Eq.~(\ref{eq:V_decomp}) it holds that:
\begin{equation}
\frac{\mathrm{d}}{\mathrm{d}t}f(\Lambda_{t}[\rho],\Lambda_{t}[\sigma])\leq0
\end{equation}
for any contractive function $f(\rho,\sigma)$ and any single-qubit
states $\rho$ and $\sigma$. 
\end{thm}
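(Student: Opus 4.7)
The plan is to reduce the action of $V_{t,s}$ on the \emph{pair} $(\Lambda_s[\rho],\Lambda_s[\sigma])$ to that of a bona fide CPTP map and then invoke contractivity of $f$. Writing $\tau_1:=\Lambda_s[\rho]$ and $\tau_2:=\Lambda_s[\sigma]$, the pivot of the whole argument is the following lemma: \emph{for any two qubit states $\tau_1,\tau_2$ there exists a unitary $W=W(\tau_1,\tau_2)$ with $W\tau_i W^{\dagger}=\tau_i^{T}$ for $i=1,2$.}

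To establish the lemma I would pass to the Bloch representation $\tau_i=\tfrac12(\id+\vec r_i\cdot\vec\sigma)$. Since $\sigma_x^T=\sigma_x$, $\sigma_z^T=\sigma_z$, $\sigma_y^T=-\sigma_y$, transposition in the computational basis realises on Bloch vectors the linear isometry $T=\diag(1,-1,1)$. What is required is an $R\in SO(3)$ with $R\vec r_i=T\vec r_i$ for both $i$. Because $T$ is an isometry, the pairs $(\vec r_1,\vec r_2)$ and $(T\vec r_1,T\vec r_2)$ share the same Gram matrix, so they are related by a linear isometry on the (at most) two-dimensional subspace they span. This partial isometry admits exactly two extensions to $O(3)$, differing by a reflection across that subspace, and precisely one of them has determinant $+1$; that extension is the desired $R$, and its spin-$\tfrac12$ lift provides $W$. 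Degenerate configurations (parallel or vanishing Bloch vectors) are handled directly.

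Given the lemma, I define $\Psi:=p\,\EE_1+(1-p)\,\EE_2\circ\mathcal{U}_W$ with $\mathcal{U}_W(\cdot):=W(\cdot)W^{\dagger}$. As a convex combination of CPTP maps, $\Psi$ is itself CPTP, and by construction
\[
\Psi[\tau_i]=p\,\EE_1[\tau_i]+(1-p)\,\EE_2[\tau_i^{T}]=V_{t,s}[\tau_i]\qquad(i=1,2).
\]
Contractivity of $f$ under CPTP maps therefore yields
\[
f(\Lambda_t[\rho],\Lambda_t[\sigma])=f(\Psi[\tau_1],\Psi[\tau_2])\le f(\tau_1,\tau_2)=f(\Lambda_s[\rho],\Lambda_s[\sigma]);
\]
since this bound holds for every admissible $s\le t$, the function $t\mapsto f(\Lambda_t[\rho],\Lambda_t[\sigma])$ is non-increasing, and the stated derivative inequality follows (interpreted, if needed, as a right derivative, as already noted in the footnote after \eeqref{eq:CP-div_by_tr-norm}).

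The hard step is the lemma, which hinges on a qubit-specific geometric coincidence: although the transpose is orientation-reversing in $\RR^3$, its restriction to \emph{any} two-dimensional subspace can be matched by an orientation-preserving extension to the full $\RR^3$, so the non-CP partial transpose can be emulated by a genuine unitary on any prescribed pair of states. This is precisely what breaks down in higher dimensions and foreshadows why the authors must move to a tripartite arena, where negativity can witness the non-Markovianity of decomposable $P$-divisible evolutions that no single-qubit contractive function can detect.
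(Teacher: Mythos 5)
Your proof is correct and follows essentially the same route as the paper's: both hinge on the observation that transposition acts on qubit Bloch vectors as the isometry $\diag(1,-1,1)$, so that for any fixed pair of states it can be replaced by a single unitary conjugation, turning $V_{t,s}$ into a genuine CPTP map on that pair, after which contractivity of $f$ finishes the argument. Your version is, if anything, slightly more careful --- you apply the lemma to $\Lambda_{s}[\rho],\Lambda_{s}[\sigma]$ rather than to $\rho,\sigma$, and you spell out the Gram-matrix and determinant argument that the paper compresses into a single sentence.
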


\begin{proof}
First, we will show that for any two single-qubit states $\rho$ and
$\sigma$ there exists a CPTP map $\Phi_{t,s}$ (that may in general
depend on both $\rho$ and $\sigma$) such that 
\begin{equation}
V_{t,s}[\rho]=\Phi_{t,s}[\rho],\,\,\,\,\,V_{t,s}[\sigma]=\Phi_{t,s}[\sigma].\label{eq:TransposeUnitary-1}
\end{equation}
This statement can be proven by considering the Bloch vectors $\boldsymbol{r}$
and $\boldsymbol{s}$ of the states $\rho$ and $\sigma$. The Bloch
vector $\tilde{\boldsymbol{r}}$ of the transposed state $\rho^{T}$
is related to $\boldsymbol{r}=(r_{x},r_{y},r_{z})$ via a reflection
on the $x$-$z$ plane, i.e., $\tilde{\boldsymbol{r}}=(r_{x},-r_{y},r_{z})$,
and similar for $\sigma$. In particular, this means that transposition
preserves the lengths of the two Bloch vectors and the angle between
them. This implies that for any two states $\rho$ and $\sigma$ there
exists a unitary rotation $U$ such that 
\begin{equation}
\rho^{T}=U\rho U^{\dagger},\,\,\,\,\,\sigma^{T}=U\sigma U^{\dagger}.\label{eq:TransposeUnitary-2}
\end{equation}
The CPTP map $\Phi_{t,s}$ fulfilling Eqs.~(\ref{eq:TransposeUnitary-1})
is thus given as 
\begin{equation}
\Phi_{t,s}[\rho]=p\EE_{1}\left[\rho\right]+(1-p)\EE_{2}\left[U\rho U^{\dagger}\right],
\end{equation}
where the unitary $U$ is chosen such that Eqs.~(\ref{eq:TransposeUnitary-2})
hold. Note that -- in general -- the unitary $U$ depends on the
two states $\rho$ and $\sigma$.

Combining the above arguments, we obtain the following for any contractive
function $f$ and any two single-qubit states $\rho$ and $\sigma$:
\begin{align}
f\left(\Lambda_{t}\left[\rho\right],\Lambda_{t}\left[\sigma\right]\right) & =f\left(V_{t,s}\circ\Lambda_{s}\left[\rho\right],V_{t,s}\circ\Lambda_{s}\left[\sigma\right]\right)\nonumber \\
 & =f\left(\Phi_{t,s}\circ\Lambda_{s}\left[\rho\right],\Phi_{t,s}\circ\Lambda_{s}\left[\sigma\right]\right)\nonumber \\
 & \leq f\left(\Lambda_{s}\left[\rho\right],\Lambda_{s}\left[\sigma\right]\right)
\end{align}
which proves that any contractive function is monotonically decreasing
with $t$. 
\end{proof}
While Theorem~\ref{thm:Functions} applies only to single-qubit systems,
this constraint can be lifted if one considers only specific functions,
namely the trace distance, the relative entropy, and the quantum relative
Rényi entropy $D_{\alpha}^{\mathrm{q}}(\rho||\sigma)$ for $\alpha>1$.
Noting that these functions are contractive under positive trace-preserving
maps~\citep{Muller+Reeb.AHP.2017}, it follows that they are monotonic
under non-Markovian evolutions which are P-divisible. We refer to
the Supplemental Material for more details.

A question which is left open in Theorem~\ref{thm:Functions} is
whether it is still possible to detect non-Markovianity via the behavior
of a contractive function $f$. Even if $f$ is monotonically decreasing
with $t$, its overall behavior might depend on whether the evolution
is Markovian or not. We answer this question in the Supplemental Material,
showing that the monotonic behavior of any contractive function can
be reproduced by Markovian dynamics.

\medskip{}

\textbf{\emph{Witnessing non-Markovianity with entanglement.}} The
results of the previous section tell us that to witness all non-Markovian
evolutions, our input state must be of higher dimension, possibly
a compound state of the system extended by ancillae, i.e., we need
to consider the evolution $\Lambda_{t}^{A}\otimes\openone^{B}$ acting
on a bipartite state $\rho=\rho^{AB}$. The behavior of any entanglement
measure $E^{A|B}$ of the final state 
\begin{equation}
\sigma_{t}=\Lambda_{t}^{A}\otimes\openone^{B}\left[\rho\right]
\end{equation}
then serves as a witness of non-Markovianty, as for any Markovian
evolution the entanglement must monotonically decrease~\citep{Rivas+2.PRL.2010}.
However, this approach is not suitable to create a universal witness
of non-Markovianity, as for any evolution $\Lambda_{t}$ which consists
of an entanglement breaking map at some finite time $t'$ followed
by an arbitrary non-Markovian evolution, the state $\sigma_{t}$ will
have zero entanglement for all $t\geq t'$~\citep{Santis+4.PRA.2019}.

Even if the evolution is not entanglement breaking, we can show that
certain entanglement quantifiers fail to detect non-Markovianity.
In the following, we quantify the amount of entanglement via negativity~\citep{ZyczkowskiPhysRevA.58.883,Vidal+Werner.PRA.2002}
\begin{equation}
E^{A|B}(\rho)=\frac{||\rho^{T_{B}}||_{1}-1}{2},\label{eq:Negativity}
\end{equation}
where $T_{B}$ denotes the partial transpose with respect to the subsystem
$B$. As is shown in the Supplemental Material, negativity is monotonic
under local positive maps of the form (\ref{eq:V_decomp}), i.e.,
\begin{equation}
P^{A}\otimes\openone^{B}[\rho]=p\EE_{1}^{A}\otimes\openone^{B}[\rho]+(1-p)\EE_{2}^{A}\otimes\openone^{B}\left[\rho^{T_{A}}\right],\label{eq:LocalPositive}
\end{equation}
for any bipartite state $\rho=\rho^{AB}$ and probability $p$ \footnote{Note that $P^{A}\otimes\openone^{B}[\rho]$ might not be positive,
we thus extend the definition of negativity in Eq.~(\ref{eq:Negativity})
to non-positive Hermitian operators with unit trace.}. This implies that negativity is monotonically decreasing for any
local evolution $\Lambda_{t}^{A}=V_{t,s}^{A}\circ\Lambda_{s}^{A}$
with $V_{t,s}$ being of the form~(\ref{eq:V_decomp}). An example
for a non-Markovian evolution admitting this form will be given in
Eq.~(\ref{eq:Pauli_map}). As we further show in the Supplemental
Material, negativity cannot be used to witness non-Markovianity if
$E^{A|B}(\Lambda_{t}^{A}\otimes\openone^{B}[\rho])$ is monotonically
decreasing with $t$, as a decreasing behavior can always be reproduced
by Markovian dynamics. From this, we conclude that negativity $E^{A|B}$
fails to witness some non-Markovian evolutions on subsystem $A$ even
if they are not entanglement breaking~\footnote{Note that negativity in general fails to detect non-Markovianity for
evolutions $\Lambda_{t}$ which consists of an NPT breaking map for
some $t'>0$, followed by an arbitrary evolution for $t>t'$.}.

In the light of these results, it is tempting to conclude that negativity
is not suitable for construction of a universal non-Markovianity witness.
Quite surprisingly, the situation changes completely by adding an
extra particle $C$, and considering the negativity $E^{AB|C}$ of
the state 
\begin{equation}
\tau_{t}^{ABC}=\Lambda_{t}^{A}\otimes\openone^{BC}\left[\rho^{ABC}\right],
\end{equation}
where $\rho^{ABC}$ is a suitably chosen initial state. In fact, taking
additional ancilla systems into account has proven to be useful for
relating different notions of non-Markovianity, see Eq.~(\ref{eq:CP-div_by_tr-norm}).
The following theorem shows that in a tripartite setting negativity
is a universal non-Markovianity witness for all invertible evolutions
and for all dynamics of a single qubit.
\begin{thm}
\label{thm:neg_is_awesome}For any invertible non-Markovian evolution
$\Lambda_{t}$ there exists a quantum state $\rho^{ABC}$ such that
\emph{
\begin{equation}
\frac{\mathrm{d}}{\mathrm{d}t}E^{AB|C}\left(\Lambda_{t}^{A}\otimes\openone^{BC}\left[\rho^{ABC}\right]\right)>0
\end{equation}
}for some $t>0$. For single-qubit evolutions $\Lambda_{t}$ the statement
also holds for non-invertible dynamics.
\end{thm}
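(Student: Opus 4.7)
The plan is to reduce Theorem~\ref{thm:neg_is_awesome} to the trace-norm characterisation of CP-divisibility in Eq.~\eqref{eq:CP-div_by_tr-norm}. If $\Lambda_t$ is invertible (or, for the second part of the statement, if $\Lambda_t$ is an arbitrary qubit dynamics, in which case the cited extension of this equivalence still applies), non-Markovianity at some time $t^\ast>0$ guarantees the existence of bipartite states $\rho_1,\rho_2$ on $A\otimes B$ with $\dim B=\dim A$ and a weight $p\in(0,1)$ for which, setting $\omega_t:=\Lambda_t^A\otimes\openone^B[p\rho_1-(1-p)\rho_2]$,
\begin{equation*}
\left.\frac{\mathrm{d}}{\mathrm{d}t}\|\omega_t\|_1\right|_{t^\ast}>0.
\end{equation*}
My task therefore reduces to promoting this trace-norm witness to a negativity witness by attaching a third subsystem $C$ and constructing an appropriate $\rho^{ABC}\ge 0$.

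The key tool is the commutativity of partial transpose on $C$ with the local evolution,
\begin{equation*}
\bigl(\Lambda_t^A\otimes\openone^{BC}[\rho^{ABC}]\bigr)^{T_C}=\Lambda_t^A\otimes\openone^{BC}\bigl[(\rho^{ABC})^{T_C}\bigr],
\end{equation*}
combined with $E^{AB|C}(\eta)=(\|\eta^{T_C}\|_1-1)/2$. These tell me that it suffices to design $\rho^{ABC}\ge 0$ whose partial transpose on $C$ isolates the Hermitian operator $p\rho_1-(1-p)\rho_2$ in one block, while the remaining blocks are $\|\cdot\|_1$-stable under $\Lambda_t^A\otimes\openone^{BC}$. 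My working ansatz is to take $C$ a qubit and set
\begin{equation*}
\rho^{ABC}=\alpha\,\tfrac{\openone^{AB}}{d_{AB}}\otimes\tfrac{\openone^C}{2}+\beta\bigl(M^{AB}\otimes|0\rangle\langle 1|^C+(M^{AB})^\dagger\otimes|1\rangle\langle 0|^C\bigr),
\end{equation*}
with $\alpha,\beta>0$ and $M^{AB}$ chosen non-Hermitian, so that $T_C$ replaces $M$ by $M^\dagger$ and genuinely alters the spectrum. Picking $M$ so that its anti-Hermitian part encodes $p\rho_1-(1-p)\rho_2$, a direct $2\times 2$ block-matrix calculation on $C$ should yield $\|\Lambda_t^A\otimes\openone^{BC}[(\rho^{ABC})^{T_C}]\|_1=\alpha+2\beta\,\|\omega_t\|_1+\text{const}$. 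The negativity therefore becomes a strictly monotone affine function of $\|\omega_t\|_1$, whose time derivative inherits positivity at $t^\ast$.

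The principal obstacle is the explicit positivity and trace-norm calculation in the ansatz above: $\beta$ must scale inversely with $\|M\|_\infty$ for $\rho^{ABC}\ge 0$, and one must avoid the pitfall that a single Pauli component on $C$ leaves the spectrum of $\rho^{ABC}$ invariant under $T_C$, killing the negativity; adding a second Pauli component on $C$ or enlarging its dimension may be needed. Once this is established, the second claim---covering non-invertible qubit dynamics---follows by setting $d_A=d_B=2$, since the cited result still guarantees that Eq.~\eqref{eq:CP-div_by_tr-norm} characterises CP-divisibility for every qubit $\Lambda_t$.
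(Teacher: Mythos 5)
Your overall strategy coincides with the paper's: reduce to the trace-norm characterisation of CP-divisibility, invoke the known existence of $\rho_1,\rho_2,p_i$ violating Eq.~(\ref{eq:CP-div_by_tr-norm}) for invertible (and all qubit) non-Markovian dynamics, and then build a tripartite state whose negativity in the $AB|C$ cut reproduces $\|\Lambda_t^A\otimes\openone^B[p_1\rho_1-p_2\rho_2]\|_1$ up to an affine transformation. That reduction is sound. The gap is in the construction itself, which is the actual content of the theorem, and your ansatz fails on two counts. First, the state $\rho^{ABC}=\alpha\,\openone^{AB}/d_{AB}\otimes\openone^{C}/2+\beta\,(M\otimes|0\ran\lan1|+M^{\dagger}\otimes|1\ran\lan0|)$ satisfies $(\openone^{AB}\otimes\sigma_x^{C})\,\rho^{ABC}\,(\openone^{AB}\otimes\sigma_x^{C})=(\rho^{ABC})^{T_C}$, because the two diagonal blocks on $C$ are equal and $T_C$ merely swaps the off-diagonal blocks $M\leftrightarrow M^{\dagger}$. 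This unitary equivalence survives the action of $\Lambda_t^{A}\otimes\openone^{BC}$, so the evolved state is PPT across $AB|C$ for \emph{all} $t$ and its negativity is identically zero. Note that your ansatz already contains two Pauli components on $C$ ($\sigma_x$ carrying the Hermitian part of $M$ and $\sigma_y$ the anti-Hermitian part), so the fix you anticipate does not apply; the fatal symmetry is the equality of the diagonal blocks. Second, even after breaking that symmetry, the claimed relation $\|\cdot\|_1=\alpha+2\beta\|\omega_t\|_1+\mathrm{const}$ would require the trace norm of the partially transposed evolved state to split additively between the identity padding and the block encoding $\omega_t$. Trace norms of sums are additive only when the summands are supported on orthogonal subspaces, which is not the case here: the negative eigenvalues would be of the form ``(eigenvalue of the padding) minus (singular value of $\Lambda_t[M]$)'', and summing their absolute values does not yield an affine function of $\|\omega_t\|_1$.

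The paper's construction is engineered precisely to obtain this orthogonality. It splits $B=B_1B_2$ and sets $\rho^{ABC}=p_1\rho_1^{AB_1}\otimes\ket{\Psi^+}\!\bra{\Psi^+}^{B_2C}+p_2\rho_2^{AB_1}\otimes\ket{\Psi^-}\!\bra{\Psi^-}^{B_2C}$. Partial transposition on $C$ sends $p_1\ket{\Psi^+}\!\bra{\Psi^+}+p_2\ket{\Psi^-}\!\bra{\Psi^-}$ into a sum of a positive piece supported on $\mathrm{span}\{\ket{01},\ket{10}\}$, carrying $\Lambda_t^A[p_1\rho_1+p_2\rho_2]$ (trace norm $1$), and a traceless piece $\tfrac12(\ket{\Phi^+}\!\bra{\Phi^+}-\ket{\Phi^-}\!\bra{\Phi^-})$ supported on the orthogonal subspace $\mathrm{span}\{\ket{00},\ket{11}\}$, carrying $\Lambda_t^A[p_1\rho_1-p_2\rho_2]$. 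Orthogonality gives the exact identity $E^{AB|C}(\tau_t^{ABC})=\tfrac12\|\Lambda_t^A[p_1\rho_1-p_2\rho_2]\|_1$, and the theorem follows immediately from the cited violation of Eq.~(\ref{eq:CP-div_by_tr-norm}). You would need to replace your ansatz by something with this orthogonal-support structure for the argument to close.
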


\begin{proof}
We introduce the following state 
\begin{equation}
\rho^{ABC}=p_{1}\rho_{1}^{AB_{1}}\otimes\ket{\Psi^{+}}\!\bra{\Psi^{+}}^{B_{2}C}+p_{2}\rho_{2}^{AB_{1}}\otimes\ket{\Psi^{-}}\!\bra{\Psi^{-}}^{B_{2}C},\label{eq:InitialState}
\end{equation}
where $B_{1}$ and $B_{2}$ are subsystems of $B=B_{1}B_{2}$, $|\Psi^{\pm}\ran=(|01\ran\pm|10\ran)/\sqrt{2}$
are maximally entangled states, and the states $\rho_{i}$ and probabilities
$p_{i}$ will be specified in more detail below. If now an evolution
$\Lambda_{t}^{A}$ acts on the state $\rho^{ABC}$, the time-evolved
state takes the form 
\begin{align}
\tau_{t}^{ABC} & =p_{1}\Lambda_{t}^{A}\left[\rho_{1}^{AB_{1}}\right]\otimes\ket{\Psi^{+}}\!\bra{\Psi^{+}}^{B_{2}C}\nonumber \\
 & +p_{2}\Lambda_{t}^{A}\left[\rho_{2}^{AB_{1}}\right]\otimes\ket{\Psi^{-}}\!\bra{\Psi^{-}}^{B_{2}C}.
\end{align}
To evaluate the negativity in $AB|C$ cut we notice that the partial
transposition with respect to $C$ is given by 
\begin{align}
\tau_{t}^{T_{C}} & =\frac{1}{2}\Lambda_{t}^{A}\left[p_{1}\rho_{1}^{AB_{1}}+p_{2}\rho_{2}^{AB_{1}}\right]\otimes\left(|01\ran\lan01|^{B_{2}C}+|10\ran\lan10|^{B_{2}C}\right)\nonumber \\
 & +\frac{1}{2}\Lambda_{t}^{A}\left[p_{1}\rho_{1}^{AB_{1}}-p_{2}\rho_{2}^{AB_{1}}\right]\otimes\left(|\Phi^{+}\ran\lan\Phi^{+}|^{B_{2}C}-|\Phi^{-}\ran\lan\Phi^{-}|^{B_{2}C}\right)\nonumber \\
\end{align}
with $|\Phi^{\pm}\ran=(|00\ran\pm|11\ran)/\sqrt{2}$. Since the states
$\ket{\Phi^{\pm}}$ are orthogonal to $\ket{01}$ and $\ket{10}$,
the trace norm of $\tau_{t}^{T_{C}}$ can be evaluated as 
\begin{equation}
\left\Vert \tau_{t}^{T_{C}}\right\Vert _{1}=1+\left\Vert \Lambda_{t}^{A}\left[p_{1}\rho_{1}^{AB_{1}}-p_{2}\rho_{2}^{AB_{1}}\right]\right\Vert _{1},
\end{equation}
where we used the fact that $\mu:=p_{1}\Lambda_{t}^{A}[\rho_{1}^{AB_{1}}]+p_{2}\Lambda_{t}^{A}[\rho_{2}^{AB_{1}}]$
is a valid quantum state, and thus $||\mu||_{1}=1$. The negativity
of $\tau_{t}^{ABC}$ is thus given as 
\begin{equation}
E^{AB|C}\left(\tau_{t}^{ABC}\right)=\frac{1}{2}\left\Vert \Lambda_{t}^{A}\left[p_{1}\rho_{1}^{AB_{1}}-p_{2}\rho_{2}^{AB_{1}}\right]\right\Vert _{1}.\label{eq:neg_witness}
\end{equation}
To complete the proof of the theorem, recall that for any invertible
evolution there exists states $\rho_{i}^{AB_{1}}$ and probabilities
$p_{i}$ such that Eq.~(\ref{eq:CP-div_by_tr-norm}) is violated
if the evolution is non-Markovian~\citep{Chruscinski+2.PRA.2011,Bylicka+2.PRL.2017}.
The same is true for all (also non-invertible) single-qubit dynamics~\citep{Chakraborty+Chruscinski.A.2019}.
\end{proof}
Few remarks regarding Theorem~\ref{thm:neg_is_awesome} are in place.
First, we note that invertible dynamics constitute the generic case
of quantum evolutions, as non-invertible evolutions have zero measure
in the space of all quantum evolutions~\citep{Ott2005,Santis+4.PRA.2019}.
Moreover, the statement of Theorem~\ref{thm:neg_is_awesome} can
be lifted to include also dynamics which are image non-increasing,
by applying the same arguments~\citep{Chruscinski+2.PRL.2018}. We
further notice that negativity is a faithful entanglement quantifier
in the setting considered here, and the states in Eq.~(\ref{eq:InitialState})
are never bound entangled, see Supplemental Material for more details.

\medskip{}

\textbf{\emph{Applications. }}We apply the results presented above
to qubit eternally non-Markovian (ENM) dynamics~\citep{Hall2014},
an evolution exhibiting non-Markovianity at any $t>0$, even at arbitrarily
small and large timescales. Such a model falls into well-studied categories
of random-unitary \citep{Chruscinski2013} and phase-covariant \citep{Smirne2016}
qubit commutative evolutions. Yet, it constitutes an important example
with its non-Markovian features being hard to witness~\citep{megier_eternal_2017,chen_hierarchy_2015}.
In general, a random-unitary qubit dynamics is described by a time-dependent
master equation:
\begin{equation}
\frac{\mathrm{d}\rho(t)}{\mathrm{d}t}=\sum_{i=1}^{3}\gamma_{i}(t)\left\{ \sigma_{i}\rho(t)\sigma_{i}-\rho(t)\right\} ,\label{eq:RU_master_eq}
\end{equation}
which upon integration yields a dynamical map corresponding to a qubit
Pauli channel, i.e.: 
\begin{equation}
\Lambda_{t}\left[\rho\right]=\sum_{\mu=0}^{3}p_{\mu}(t)\sigma_{\mu}\rho\sigma_{\mu},\label{eq:Pauli_map}
\end{equation}
where the mixing probabilities $p_{\mu}(t)$, and their time-dependence,
can be explicitly expressed as a function of $\gamma_{i}(t)$ \citep{Chruscinski2013}.
For any such evolution the CP-divisibility condition (\ref{cond:CP-div})
is equivalent to the statement that for all $t>0$ all the decay rates
are non-negative, $\gamma_{i}(t)\ge0$, while the P-divisibility criterion
corresponds to a weaker requirement that at all times $t>0$ each
pair ($i\ne j$) of decay parameters satisfies $\gamma_{i}(t)+\gamma_{j}(t)\ge0$
\citep{Chruscinski2013}.

The ENM model introduced in Ref.~\citep{Hall2014} corresponds then
to the choice: 
\begin{equation}
\gamma_{1}=\gamma_{2}=\alpha\frac{c}{2},\quad\gamma_{3}(t)=-\alpha\frac{c}{2}\tanh(ct)\label{eq:ENM_model}
\end{equation}
with $\alpha\ge1$ and $c>0$. Crucially, ENM dynamics exhibits non-Markovianity
at all times, as $\gamma_{3}(t)<0$ for all $t>0$. In contrast, it
is always P-divisible due to $\gamma_{\ell}+\gamma_{3}(t)=\alpha\frac{c}{2}(1-\tanh(ct))\ge0$
for $\ell\in\{1,2\}$ and any $t\ge0$ \citep{Benatti2017,Kolodynski2018}.
Still, the resulting CP-map~\eqref{eq:Pauli_map} is\emph{ }invertible,
i.e., for every $t\ge0$ one can find a linear map $\Lambda_{t}^{-1}$
such that $\Lambda_{t}^{-1}\circ\Lambda_{t}=\openone$. As a result,
one can unambiguously define $V_{t,s}=\Lambda_{t}\circ\Lambda_{s}^{-1}$
in \eqref{cond:CP-div} and explicitly compute its Choi-Jamio\l kowski
(CJ) matrix, $\Omega_{V_{t,s}}:=2\,V_{t,s}\otimes\openone\left[\ket{\Phi^{+}}\!\bra{\Phi^{+}}\right]$,
associated with it: 
\begin{equation}
\Omega_{V_{t,s}}=\frac{1}{2}\left(\begin{array}{cccc}
1+\lambda_{t-s}^{2\alpha} & 0 & 0 & 2\Gamma_{t,s}^{\alpha}\\
0 & 1-\lambda_{t-s}^{2\alpha} & 0 & 0\\
0 & 0 & 1-\lambda_{t-s}^{2\alpha} & 0\\
2\Gamma_{t,s}^{\alpha} & 0 & 0 & 1+\lambda_{t-s}^{2\alpha}
\end{array}\right),\label{eq:ENM_CJmat}
\end{equation}
where $\lambda_{\tau}=\mathrm{e}^{-c\tau}$ and $\Gamma_{t,s}=\lambda_{t-s}\cosh(ct)\text{sech}(cs)$.
It may be explicitly verified that $\Omega_{V_{t,s}}$ is non-positive
for any $0<s<t$, confirming the ``eternal non-Markovianity'' of
dynamics, unless $s=0$ for which $\Omega_{V_{t,0}}=\Omega_{\Lambda_{t}}\ge0$
assures the physicality of the overall evolution.

In the Supplemental Material, we explicitly show that the CJ-matrix~\eqref{eq:ENM_CJmat}
admits a convex decomposition: 
\begin{equation}
\Omega_{V_{t,s}}=p_{1}P_{\Phi_{+}}+p_{2}P_{\Phi_{-}}+(1-p_{1}-p_{2})P_{\Psi_{+}}^{T_{B}}\label{eq:ENM_decomp}
\end{equation}
with probabilities $p_{1}=\frac{1}{2}\left(\lambda_{t-s}^{2\alpha}+\Gamma_{t,s}^{\alpha}\right)$
and $p_{2}=\frac{1}{2}\left(1-\Gamma_{t,s}^{\alpha}\right)$, and
$P_{\psi}=2\ket{\psi}\!\bra{\psi}$. Hence, it follows (see Supplemental
Material for a general discussion) that the decomposition \eqref{eq:ENM_decomp}
of the CJ-matrix assures the map $V_{t,s}$ for the ENM dynamics to
admit a decomposition \eqref{eq:V_decomp}. As a direct consequence,
Theorem~\ref{thm:Functions} applies to the ENM dynamics, implying
that no contractive function $f(\rho,\sigma)$ evaluated on single-qubit
states $\rho$ and $\sigma$ will be able to witness non-Markovianity
of the ENM model. Moreover, as Eq.~\eqref{eq:V_decomp} naturally
generalizes to Eq.~\eqref{eq:LocalPositive}, it becomes evident
that negativity cannot be used in the usual bipartite setting $E^{A|B}(\Lambda_{t}^{A}\otimes\openone^{B}[\rho])$
to witness the non-Markovianity of the ENM evolution.

\begin{figure}
\includegraphics[width=1\columnwidth]{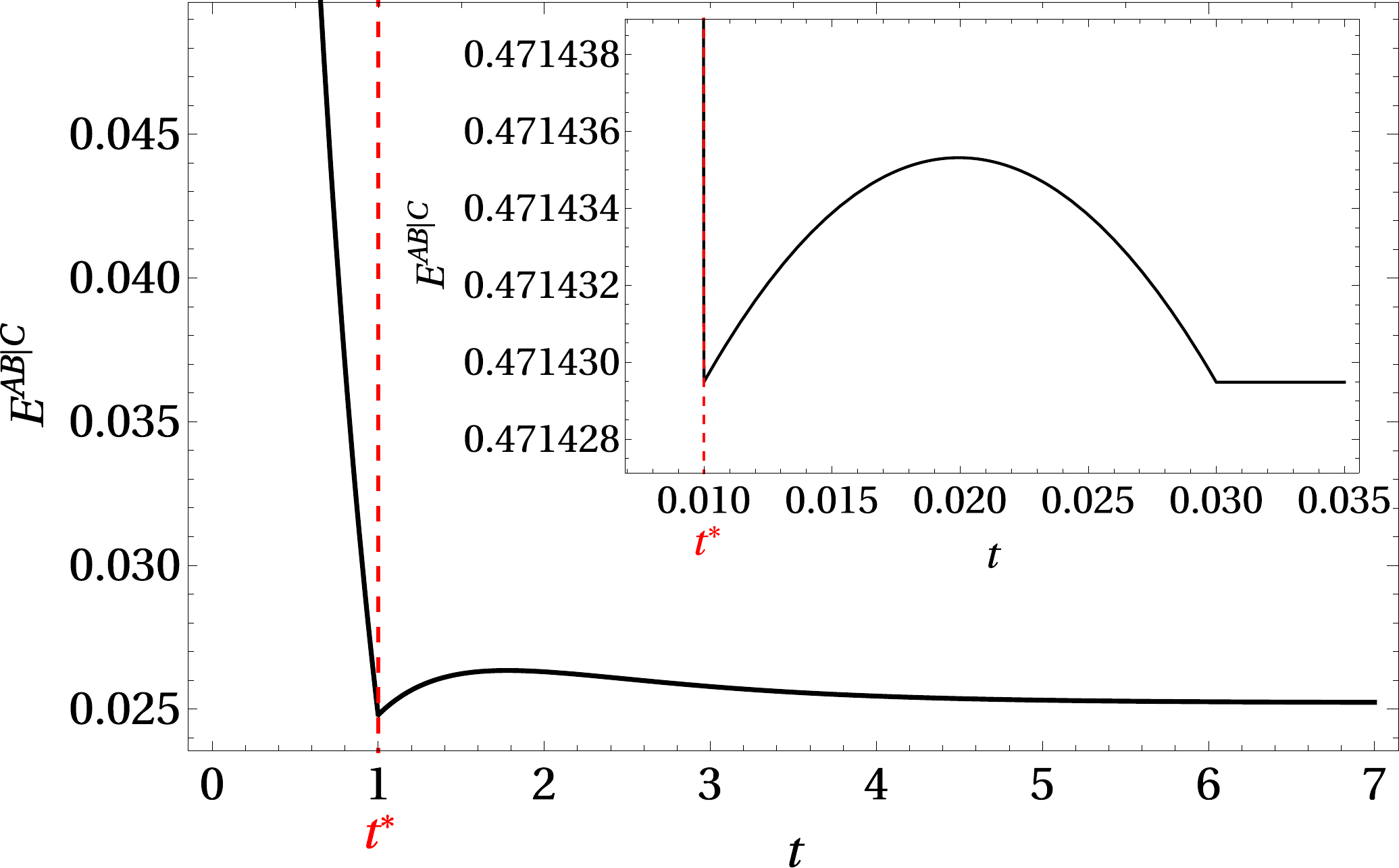}\caption{\label{fig:neg_NMwitmess_ENM} Negativity$E^{AB|C}$ as a function
of time $t$ for the eternally non-Markovian (ENM) qubit dynamics
\eqref{eq:ENM_model} with $\alpha=2$ and $c=1/2$. The initial state
$\rho^{ABC}$ has been set as in Eq.~\eqref{eq:InitialState} with
probabilities $p_{i}$ and states $\rho_{i}^{AB_{1}}$ chosen according
to the constructive method of \citet{Bylicka+2.PRL.2017}, leading
to violation of Eq.~\eqref{eq:CP-div_by_tr-norm} for a specific
time $t^{*}>0$. The plot shows detection of non-Markovianity at $t^{*}=1$
($t^{*}=0.01$ in the inset), which is marked on the axis and with
a dashed red vertical line.}
\end{figure}

However, we explicitly demonstrate that, in accordance with the Theorem~\ref{thm:neg_is_awesome},
negativity in the tripartite setting, $E^{AB|C}$, can be used to
faithfully witness the non-Markovianity of the ENM evolution for any
$t^{*}>0$. In order to choose the initial state $\rho^{ABC}$ in
Eq.~\eqref{eq:InitialState}---in particular, its constituents $p_{\ell}\rho_{\ell}^{AB_{1}}$
($\ell=1,2$) such that $E^{AB|C}$ increases at a given $t^{*}>0$---we
follow the constructive method of \citet{Bylicka+2.PRL.2017}. We
choose $\rho_{\ell}^{AB_{1}}\in\mathcal{B}(\mathbb{C}_{2}\otimes\mathbb{C}_{3})$
and mixing probabilities $p_{\ell}$ such that the trace norm in Eq.~\eqref{eq:neg_witness}
is assured to increase at time $t^{*}$ \citep{Bylicka+2.PRL.2017}.
The construction with the analytic proof can be found in the Supplemental
Material. Yet, in Fig.~\ref{fig:neg_NMwitmess_ENM}, we plot the
dynamical behaviour of $E^{AB|C}$ for the ENM model \eqref{eq:ENM_model}
with $\alpha=2$ and $c=\frac{1}{2}$ after setting $\rho^{ABC}$,
so that the non-Markovianity of dynamics can be clearly witnessed
at time $t^{*}=1$ (and $t^{*}=0.01$ within the inset).

\medskip{}

\textbf{\emph{Conclusions. }}In this Letter we discuss possibilities
and limitations to detect non-Markovianity in qubit systems and beyond.
It is shown that a very general class of quantities based on contractive
functions fails to detect non-Markovianity of all qubit evolutions.
This includes widely studied quantifiers such as trace distance, fidelity,
and quantum relative entropy. It is shown that all of them fail to
witness non-Markovianity in a certain class of evolutions, which includes
eternal non-Markovian dynamics exhibiting non-Markovianity at all
times $t>0$.

If entangled systems are employed to witness non-Markovianity, we
show that the situation strongly depends on the number of particles
used. Surprisingly, for three particles $A$, $B$, and $C$ it is
possible to witness non-Markovianity of all invertible dynamics of
system $A$ by considering entanglement in the cut $AB|C$. We show
this explicitly for entanglement negativity, a computable measure
of entanglement, which is non-monotonic for any non-Markovian invertible
dynamics and a suitably chosen initial state. For single-qubit evolutions
our results apply also when the dynamics is not invertible. As an
example, we show results for the eternal non-Markovianity model, where
the non-monotonic behavior of negativity can be observed at arbitrary
small times.

Our results demonstrate that well-established entanglement quantifiers
can be useful as faithful non-Markovianity witnesses for very general
classes of evolutions. An important question left open in this work
is whether entanglement measures can universally witness non-Markovianity
of all evolutions, incuding non-invertible dynamics beyond qubits.
Recalling that entanglement theory is a prominent example of more
general quantum resource theories, the fundamental connection between
entanglement and non-Makovianity presented in our work can also be
useful for the development of a resource theory of non-Markovianity~\citep{Bhattacharya1803.06881,Anand1903.03880}.

This work was supported by the ''Quantum Optical Technologies''
project, carried out within the International Research Agendas programme
of the Foundation for Polish Science co-financed by the European Union
under the European Regional Development Fund.

\bibliographystyle{apsrev4-1}
\bibliography{NegBib}

\appendix

\section*{Supplemental Material}

\section{Contractive functions under positive maps}

Let $f(\rho,\sigma)$ be a function which is contractive under positive
trace-preserving maps $P$, i.e., 
\begin{equation}
f\left(\rho,\sigma\right)\geq f\left(P[\rho],P[\sigma]\right).\label{eq:P-div}
\end{equation}
We will now show that any such function fulfills 
\begin{equation}
\frac{\mathrm{d}}{\mathrm{d}t}f\left(\Lambda_{t}\left[\rho\right],\Lambda_{t}\left[\sigma\right]\right)\leq0
\end{equation}
for any P-divisible evolution $\Lambda_{t}$. For this, it is enough
to note that
\begin{align}
f\left(\Lambda_{t}[\rho],\Lambda_{t}[\sigma]\right) & =f\left(V_{t,s}\circ\Lambda_{s}[\rho],V_{t,s}\circ\Lambda_{s}[\sigma]\right)\\
 & \leq f\left(\Lambda_{s}[\rho],\Lambda_{s}[\sigma]\right),\nonumber 
\end{align}
for any $0\leq s\leq t$, where we used the fact that $V_{t,s}$ is
a positive trace-preserving map for any P-divisible evolution $\Lambda_{t}$.

\section{\label{sec:Monotonic-functions}Monotonically decreasing functions
and entanglement measures cannot witness non-Markovianity}

Here we will show that a contractive function $f$ cannot witness
non-Markovianity of $\Lambda_{t}$ if $f(\Lambda_{t}[\rho],\Lambda_{t}[\sigma])$
is monotonically decreasing. We will show this for the case of discrete
time steps $t_{i}$ with $t_{0}=0$. Then, there exists a CP-divisible
family of maps $W_{t,s}$ such that 
\begin{equation}
f\left(\mu_{i},\tau_{i}\right)=f\left(\Lambda_{t_{i}}[\rho],\Lambda_{t_{i}}[\sigma]\right)\label{eq:Profile}
\end{equation}
is true for all $i$, where the states $\mu_{i}$ and $\tau_{i}$
are defined recursively via 
\begin{equation}
\mu_{i+1}=W_{t_{i+1},t_{i}}[\mu_{i}],\,\,\,\,\,\tau_{i+1}=W_{t_{i+1},t_{i}}[\tau_{i}],
\end{equation}
and $\mu_{0}=\rho$, $\tau_{0}=\sigma$. CP-divisible family $W_{t,s}$
which achieves this is given by 
\begin{equation}
W_{t_{i+1},t_{i}}[\rho]=a_{i}^{t_{i+1}-t_{i}}\rho+(1-a_{i}^{t_{i+1}-t_{i}})\frac{\openone_{d}}{d},\label{eq:CPsimulation}
\end{equation}
where the parameters $0\leq a_{i}\leq1$ are chosen such that Eq.~(\ref{eq:Profile})
is fulfilled. By continuity, such values for $a_{i}$ always exist,
as $f(W_{t_{i+1},t_{i}}[\rho],W_{t_{i+1},t_{i}}[\sigma])$ monotonically
decreases with decreasing $a_{i}$, achieving minimal value for $a_{i}=0$.

By similar arguments it follows that the behavior of any entanglement
measure $E^{AB|C}$ cannot witness non-Markovianity of $\Lambda_{t}$
if $E^{AB|C}(\Lambda_{t}^{A}\otimes\openone^{BC}[\rho^{ABC}])$ is
monotonically decreasing with $t$. For this, we recursively define
tripartite states 
\[
\mu_{i+1}^{ABC}=W_{t_{i+1},t_{i}}^{A}\otimes\openone^{BC}[\mu_{i}^{ABC}]
\]
with $\mu_{0}^{ABC}=\rho^{ABC}$, and $W_{t,s}$ is a local CP-divisible
family defined in Eq.~\eqref{eq:CPsimulation}. Here, the parameters
$a_{i}$ are chosen such that 
\begin{equation}
E^{AB|C}\left(\mu_{i}^{ABC}\right)=E^{AB|C}\left(\Lambda_{t_{i}}^{A}\otimes\openone^{BC}[\rho^{ABC}]\right).
\end{equation}
Again, such values of $a_{i}$ always exist by continuity, as $E^{AB|C}(W_{t_{i+1},t_{i}}^{A}\otimes\openone^{BC}[\rho])$
monotonically decreases with decreasing $a_{i}$, achieving minimal
value for $a_{i}=0.$

\section{Indecomposable positive maps}

Positive linear maps which admit the decomposition 
\begin{equation}
P[\rho]=p\EE_{1}[\rho]+(1-p)\EE_{2}[\rho^{T}]\label{Eq:Our.P.Map}
\end{equation}
are a subset of \emph{decomposable maps}, that is those positive maps
which can be decomposed as a sum of a CP and co-CP map: 
\begin{equation}
P_{\mathrm{dec}}[\rho]=\tilde{\mathcal{E}}_{1}[\rho]+\tilde{\mathcal{E}}_{2}[\rho^{T}],\label{Def:Decomposable.Map}
\end{equation}
where $\tilde{\mathcal{E}}_{i}$ are (not necessarily trace-preserving)
CP maps. An example of a trace-preserving positive map which cannot
be decomposed as \eqref{Eq:Our.P.Map} is the following: $P\colon M_{3}\to M_{3}$
($M_{n}$ are $n$-by-$n$ matrices over complex numbers, $n>1$)
given by 
\begin{equation}
P\left[\begin{pmatrix}a_{11} & a_{12} & a_{13}\\
a_{21} & a_{22} & a_{23}\\
a_{31} & a_{32} & a_{33}
\end{pmatrix}\right]=\frac{1}{3}\begin{pmatrix}a_{11}+2a_{22} & -a_{12} & -a_{13}\\
-a_{21} & a_{22}+2a_{33} & -a_{23}\\
-a_{31} & -a_{32} & a_{33}+2a_{11}
\end{pmatrix}.\label{Ex:Choi.Map}
\end{equation}
\citet{Choi1975285} has showed that $P$ cannot be decomposed as~\eqref{Def:Decomposable.Map}
and hence neither as \eqref{Eq:Our.P.Map}.

For $m,n\geq2$, all positive maps $P\colon M_{m}\to M_{n}$ are decomposable
for $m+n\leq5$ and for all $m+n>5$ there are indecomposable positive
maps~\citep{Stormer2013}. The example in \eqref{Ex:Choi.Map} was
the first indecomposable map, given by \citet{Choi1975285}, for $m=n=3$,
and \citet{WORONOWICZ1976165} gave the first indecomposable map for
$m=2$, $n=4$.

\section{\label{sec:Negativity}Negativity and local positive maps}

Here we will show that negativity is monotonic under local positive
maps of the form 
\begin{equation}
P^{A}\otimes\openone^{B}[\rho]=p\EE_{1}^{A}\otimes\openone^{B}[\rho]+(1-p)\EE_{2}^{A}\otimes\openone^{B}\left[\rho^{T_{A}}\right],
\end{equation}
for any CPTP maps $\EE_{i}$, bipartite state $\rho=\rho^{AB}$, and
probability $p$. Noting that $P^{A}$ commutes with partial transposition
$T_{B}$, we obtain 
\begin{align}
 & E^{A|B}\left(P^{A}\otimes\openone^{B}\left[\rho\right]\right)=\frac{1}{2}\left(\left\Vert P^{A}\otimes\openone^{B}\left[\rho^{T_{B}}\right]\right\Vert _{1}-1\right)\\
 & \leq\frac{1}{2}\left(p\left\Vert \EE_{1}^{A}\otimes\openone^{B}\left[\rho^{T_{B}}\right]\right\Vert _{1}+(1-p)\left\Vert \EE_{2}^{A}\otimes\openone^{B}\left[\rho^{T_{AB}}\right]\right\Vert _{1}-1\right)\nonumber \\
 & =\frac{p}{2}\left(\left\Vert \EE_{1}^{A}\otimes\openone^{B}\left[\rho^{T_{B}}\right]\right\Vert _{1}-1\right)\nonumber \\
 & =pE^{A|B}\left(\EE_{1}^{A}\otimes\openone^{B}\left[\rho\right]\right)\leq E^{A|B}(\rho),\nonumber 
\end{align}
where we used convexity of the trace norm and its monotonicity under
CPTP maps, and the fact that $||\EE_{2}^{A}\otimes\openone^{B}[\rho^{T_{AB}}]||_{1}=1$.

\section{No bound entanglement for states in Eq.~(16)}

Here we will show that states defined in Eq.~(16) of the main text
are never bound entangled in the bipartition $AB|C$. For this, will
show below that all states defined in Eq.~(16) fulfill the inequality
\begin{equation}
S(\rho^{AB})\geq S(\rho^{ABC}),\label{eq:BE}
\end{equation}
and that they are separable if $S(\rho^{AB})=S(\rho^{ABC})$. Noting
that a sufficient criterion for distillability of a general state
$\rho^{ABC}$ in the bipartition $AB|C$ is that $S(\rho^{AB})>S(\rho^{ABC})$~\citep{DevetakRspa.2004.1372},
this proves that none of the states defined in Eq.~(16) is bound
entangled.

To show that the inequality~(\ref{eq:BE}) is fulfilled by all states
in Eq.~(16), note that 
\begin{align}
S(\rho^{ABC}) & =h(p_{1})+p_{1}S(\rho_{1}^{AB_{1}})+p_{2}S(\rho_{2}^{AB_{1}}),\\
S(\rho^{AB}) & =1+S(p_{1}\rho_{1}^{AB_{1}}+p_{2}\rho_{2}^{AB_{1}}).
\end{align}
where $h(x)=-x\log_{2}x-(1-x)\log_{2}(1-x)$ is the binary entropy.
Using concavity of the von Neumann entropy and the fact that $h(p_{1})\leq1$,
we obtain the following: 
\begin{align}
S(\rho^{AB}) & =1+S(p_{1}\rho_{1}^{AB_{1}}+p_{2}\rho_{2}^{AB_{1}})\label{eq:BE-2}\\
 & \geq1+p_{1}S(\rho_{1}^{AB_{1}})+p_{2}S(\rho_{2}^{AB_{1}})\nonumber \\
 & \geq h(p_{1})+p_{1}S(\rho_{1}^{AB_{1}})+p_{2}S(\rho_{2}^{AB_{1}})=S(\rho^{ABC}),\nonumber 
\end{align}
which proves Eq.~(\ref{eq:BE}). 

In case that $S(\rho^{AB})=S(\rho^{ABC})$ both inequalities in Eq.~(\ref{eq:BE-2})
must hold with equality, which implies that 
\begin{equation}
\rho_{1}^{AB_{1}}=\rho_{2}^{AB_{1}},\,\,\,\,\,\,p_{1}=p_{2}=\frac{1}{2}.
\end{equation}
It is straightforward to verify that in this case the state in Eq.~(16)
is separable in the bipartition $AB|C$.

\section{\label{sec:ENM}Eternally non-Markovian qubit dynamics}

For the general solution to the master equation (21) of the main text
describing random unitary dynamics we refer the reader to Ref.~\citet{Chruscinski2013}.
Still, for the choice of decay parameters (23) corresponding to the
ENM model, the mixing probabilities $p_{\mu}(t)$ in the dynamical
(Pauli) map $\Lambda_{t}$ defined in (22) read: \begin{subequations}
\begin{align}
p_{0}^{(\alpha)}(t) & =\frac{1}{4}\left[1+e^{-2\alpha ct}\left(1+2e^{\alpha ct}\cosh^{\alpha}(ct)\right)\right],\\
p_{1}^{(\alpha)}(t) & =p_{2}^{(\alpha)}(t)=\frac{1}{4}\left(1-e^{-2\alpha ct}\right),\\
p_{3}^{(\alpha)}(t) & =\frac{1}{4}\left[1+e^{-2\alpha ct}\left(1-2e^{\alpha ct}\cosh^{\alpha}(ct)\right)\right],
\end{align}
\end{subequations}where with a superscript we have specially stated
the dependence on the parameter $\alpha\ge1$. We have kept the $\alpha$-dependence
explicit, so that we can conveniently express the inverse map of $\Lambda_{t}$
(i.e., $\Lambda_{t}^{-1}$~s.t.~$\Lambda_{t}^{-1}\circ\Lambda_{t}=\openone$),
which also takes the Pauli form (22), as $\Lambda_{t}^{-1}\left[\rho\right]=\sum_{\mu}p_{\mu}^{(-\alpha)}(t)\sigma_{\mu}\rho\sigma_{\mu}$
by simply changing the sign of $\alpha$.

As a result, the CJ matrix of the map $V_{t,s}=\Lambda_{t}\circ\Lambda_{s}^{-1}$
stated in Eq.~(24) can be directly computed as 
\begin{align}
\Omega_{V_{t,s}} & =\left(\Lambda_{t}\circ\Lambda_{s}^{-1}\right)\otimes\openone\left[P_{\Phi_{+}}\right]\\
 & =\sum_{\mu,\nu}p_{\mu}^{(\alpha)}(t)\,p_{\nu}^{(-\alpha)}(s)\;\sigma_{\mu}^{A}\sigma_{\nu}^{A}P_{\Phi_{+}}\sigma_{\nu}^{A}\sigma_{\mu}^{A}\nonumber \\
 & =\left(\sum_{\mu}p_{\mu}^{(\alpha)}(t)\,p_{\mu}^{(-\alpha)}(s)\right)\;P_{\Phi_{+}}\nonumber \\
 & +\sum_{i\ne j}p_{i}^{(\alpha)}(t)\,p_{j}^{(-\alpha)}(s)\;\sigma_{i}^{A}\sigma_{j}^{A}P_{\Phi_{+}}\sigma_{j}^{A}\sigma_{i}^{A}\nonumber \\
 & +\sum_{i}\left(p_{0}^{(\alpha)}(t)\,p_{i}^{(-\alpha)}(s)+p_{i}^{(\alpha)}(t)\,p_{0}^{(-\alpha)}(s)\right)\;\sigma_{i}^{A}P_{\Phi_{+}}\sigma_{i}^{A},\nonumber 
\end{align}
where $P_{\psi}=2\ket{\psi}\!\bra{\psi}$. Using, the properties of
Pauli operators and Bell states (e.g., $\sigma_{1}^{A}\left|\Phi_{+}\right\rangle =\left|\Psi_{+}\right\rangle $),
as well as $P_{\Psi_{+}}^{T_{B}}=\openone_{4}-P_{\Phi_{-}}$, one
arrives at the decomposition (25): 
\begin{equation}
\Omega_{V_{t,s}}=p_{1}P_{\Phi_{+}}+p_{2}P_{\Phi_{-}}+(1-p_{1}-p_{2})P_{\Psi_{+}}^{T_{B}},
\end{equation}
with \begin{subequations}
\begin{align}
p_{1} & =\frac{1}{2}e^{-c\,(t-s)\,\alpha}\left[e^{-c\,(t-s)\,\alpha}+\left(\cosh(cs)\,\text{sech}(ct)\right)^{-\alpha}\right],\\
p_{2} & =\frac{1}{2}\left[1-e^{-c\,(t-s)\,\alpha}\left(\cosh(cs)\,\text{sech}(ct)\right)^{-\alpha}\right].
\end{align}
\end{subequations}In order to prove that $\{p_{1},p_{2},1-p_{1}-p_{2}\}$
constitutes a valid probability distribution, it is enough to demonstrate
that $p_{1},p_{2}\geq0$ and $p_{1}+p_{2}\leq1$. Being a sum of nonnegative
quantities, clearly $p_{1}\ge0$. Since $s\leq t$ and $c,\,\alpha$
are positive, 
\begin{equation}
0\leq e^{-2c\,(t-s)\,\alpha}\leq1,
\end{equation}
showing that $p_{1}+p_{2}\leq1$. Thus, it remains only to show that
$p_{2}\geq0$, which is equivalent to 
\begin{align}
 & \left[e^{c\,(t-s)}\cosh(cs)\,\text{sech}(ct)\right]^{-\alpha}\leq1\\
\Leftrightarrow\quad & e^{c\,(t-s)}\cosh(cs)\,\text{sech}(ct)\geq1\nonumber \\
\Leftrightarrow\quad & \frac{1+e^{-2c\,s}}{1+e^{-2c\,t}}\geq1\nonumber \\
\Leftrightarrow\quad & s\leq t,\nonumber 
\end{align}
which is true.

\section{\label{sec:P_decomp}Choi-Jamio\l kowski matrix decomposition for
the P-maps of interest}

The action of any linear TP-map $\Lambda$ on $\rho\in\mathcal{B}(\mathcal{H}_{d})$
can be generally expressed as 
\begin{equation}
\Lambda\left[\rho\right]=\mathrm{Tr}_{B}\!\left\{ \Omega_{\Lambda}(\openone_{d}\otimes\rho^{T})\right\} ,
\end{equation}
where $\Omega_{\Lambda}:=\Lambda\otimes\openone\left[d\left|\Phi^{+}\right\rangle \!\left\langle \Phi^{+}\right|\right]$
is the ``effective'' CJ-matrix satisfying $\mathrm{Tr}_{A}\Omega_{\Lambda}=\openone_{d}$,
yet not being necessarily positive semi-definite.

Now, let us show that if $\Omega_{\Lambda}$ admits a convex decomposition:
\begin{equation}
\Omega_{\Lambda}=p\,\Omega_{\mathcal{E}_{1}}+(1-p)\,\Omega_{\mathcal{E}_{2}}^{T_{B}}
\end{equation}
with $0\le p\le1$, $\Omega_{\mathcal{E}_{\ell}}\ge0$ and $\mathrm{Tr}_{A}\Omega_{\mathcal{E}_{\ell}}=\openone_{d}$
for both $\ell=1,2$, then the linear map $\Lambda$ can always be
decomposed as stated in the main text in Eq.~(5).

This is because one may then explicitly write: 
\begin{align}
\Lambda[\rho] & =\mathrm{Tr}_{B}\!\left\{ \Omega_{\Lambda}\,(\openone\otimes\rho^{T})\right\} \\
 & =\mathrm{Tr}_{B}\!\left\{ \left(p\,\Omega_{\mathcal{E}_{1}}+(1-p)\,\Omega_{\mathcal{E}_{2}}^{T_{B}}\right)(\openone\otimes\rho^{T})\right\} \nonumber \\
 & =p\,\mathrm{Tr}_{B}\!\left\{ \Omega_{\mathcal{E}_{1}}(\openone\otimes\rho^{T})\right\} +(1-p)\,\mathrm{Tr}_{B}\!\left\{ \left(\Omega_{\mathcal{E}_{2}}\right)(\openone\otimes\rho)\right\} \nonumber \\
 & =p\,\mathcal{E}_{1}[\rho]+(1-p)\,\mathcal{E}_{2}[\rho^{T}],\nonumber 
\end{align}
where $\mathcal{E}_{\ell}\left[\rho\right]=\mathrm{Tr}_{B}\!\left\{ \Omega_{\mathcal{E}_{\ell}}(\openone_{d}\otimes\rho^{T})\right\} $
are the CPTP maps defined by the (positive semi-definite) CJ-matrices
$\Omega_{\mathcal{E}_{\ell}}$.

\medskip{}

\section{\label{sec:Bogna_procedure} Constructing $\rho^{ABC}$ such that
$E^{AB|C}$ is a faithful non-Markovianity witness for a given time
instance $t^{*}>0$}

We follow the method of \citet{Bylicka+2.PRL.2017} which describes
how to construct initial states $\rho_{1}$ and $\rho_{2}$, such
that for any family of invertible dynamical maps, $\Lambda_{t}:\mathcal{B}(\mathcal{H}_{d})\to\mathcal{B}(\mathcal{H}_{d})$,
the trace distance (2) is always increasing at a given time $t^{*}>0$,
i.e., the right derivative 
\begin{align}
\left.\frac{\mathrm{d}}{\mathrm{d}t}\big\|\,\Lambda_{t}\otimes\openone_{d+1}[\rho_{1}-\rho_{2}]\,\big\|_{1}\right|_{t=t_{+}^{*}} & >0\label{eq:trace_grows}
\end{align}
defined via $\left.\frac{\mathrm{d}}{\mathrm{d}t}\bullet\right|_{t_{+}^{*}}:=\lim_{\delta t\to0_{+}}\left.\frac{\mathrm{d}}{\mathrm{d}t}\bullet\right|_{t=t^{*}+\delta t}$
is positive; whenever $V_{t^{*}+\delta t,t^{*}}=\Lambda_{t^{*}+\delta t}\circ\Lambda_{t^{*}}^{-1}$
is not CP as $\delta t\to0_{+}$, i.e., the dynamical family is not
CP-divisible at $t^{*}$. Note that thanks to considering the ancilla
above to be of dimension $d+1$, the probabilities in Eq.~(2) of
the main text can be assumed $p_{1}=p_{2}=\frac{1}{2}$ without loss
of generality.

Once $\rho_{1}$ and $\rho_{2}$ are determined, by setting the initial
tripartite state introduced in Eqs.~(16-17) as
\begin{equation}
\rho^{ABC}=\frac{1}{2}\left(\rho_{1}\otimes\ket{\Psi^{+}}\!\bra{\Psi^{+}}+\rho_{2}\otimes\ket{\Psi^{-}}\!\bra{\Psi^{-}}\right)
\end{equation}
with $\rho^{ABC}\in\mathcal{B}(\mathcal{H}_{d}^{A}\otimes\mathcal{H}_{d+1}^{B_{1}}\otimes\mathcal{H}_{2}^{B_{2}}\otimes\mathcal{H}_{2}^{C})$,
the condition \eqref{eq:trace_grows} assures the corresponding negativity
to fulfill
\begin{equation}
\left.\frac{\mathrm{d}}{\mathrm{d}t}E^{AB|C}\left(\tau_{t}^{ABC}\right)\right|_{t=t_{+}^{*}}=\frac{1}{2}\left.\frac{\mathrm{d}}{\mathrm{d}t}\big\|\,\Lambda_{t}\otimes\openone_{d+1}[\rho_{1}-\rho_{2}]\,\big\|_{1}\right|_{t=t_{+}^{*}}>0,\label{eq:negativity_grows}
\end{equation}
so that $E^{AB|C}$ can be, indeed, considered a faithful witness
of non-Markovianity for any $t^{*}>0$ at which the CP-divisibility
property of dynamics is violated.

\subsubsection*{Constructing necessary $\rho_{1}$ and $\rho_{2}$ in case of the
eternally non-Markovian qubit dynamics for a given $t^{*}>0$}

Here, we describe in %
\begin{comment}
more
\end{comment}
detail the above procedure for the case of ENM qubit dynamics%
\begin{comment}
, for which the behavior of $E^{AB|C}$ is depicted in Fig.~1
\end{comment}
. Note that for any qubit dynamics the above construction requires
$\rho_{1},\rho_{2}\in\mathcal{B}(\mathcal{H}_{2}^{A}\otimes\mathcal{H}_{3}^{B_{1}})$,
i.e., to deal with qubit-qutrit states. The form of the dynamical
map $\Lambda_{t}$ at each $t\ge0$, as well as its inverse $\Lambda_{t}^{-1}$,
for the ENM model are described above in Sec.~\ref{sec:ENM}. Following
the method of \citet{Bylicka+2.PRL.2017}:
\begin{enumerate}
\item We choose the maximally mixed state, $\sigma=\frac{1}{6}\openone_{6}$,
as an example of a state that lies in the image of $\Lambda_{t}\otimes\openone_{3}$
for any $t\ge0$ in case of the ENM model.
\item We set $\rho^{A}=\left|0\right\rangle \!\left\langle 0\right|$ as
an exemplary state in $\mathcal{B}(\mathcal{H}_{2}^{A})$.
\item We compute states $\rho_{1}^{\prime}(\lambda)=(1-\lambda)\,\sigma+\lambda\ket{\Phi^{+}}\!\bra{\Phi^{+}}$
and $\rho_{2}^{\prime}(\lambda)=(1-\lambda)\,\sigma+\lambda\,\rho^{A}\otimes\left|2\right\rangle \left\langle 2\right|$
with $\left|\Phi^{+}\right\rangle =\frac{1}{\sqrt{2}}\left|00\right\rangle +\left|11\right\rangle $,
such that $\big\|\rho_{1}^{\prime}(\lambda)-\rho_{2}^{\prime}(\lambda)\big\|_{1}=2\lambda$.
\item For a given fixed $t^{*}>0$, we find maximal $0<\lambda\le1$ such
that both $\Lambda_{t^{*}}^{-1}\otimes\openone_{3}\!\left[\rho_{1}^{\prime}(\lambda)\right]\ge0$
and $\Lambda_{t^{*}}^{-1}\otimes\openone_{3}\!\left[\rho_{2}^{\prime}(\lambda)\right]\ge0$
are legitimate quantum states. For the ENM dynamics and above choices,
we obtain:
\begin{equation}
\lambda^{*}=\frac{1}{3e^{2\alpha ct^{*}}-2}.\label{eq:lambda_star}
\end{equation}
\item In this way, we arrive at the desired initial states that read:
\begin{align}
\rho_{1} & :=\Lambda_{t^{*}}^{-1}\otimes\openone_{3}\!\left[\rho_{1}^{\prime}(\lambda^{*})\right]\\
 & =\frac{1}{6}\;\mathrm{diag}(1-\lambda^{*},1-\lambda^{*},2+4\lambda^{*},1+\lambda^{*},1+\lambda^{*},0)\nonumber 
\end{align}
and
\begin{align}
\rho_{2} & :=\Lambda_{t^{*}}^{-1}\otimes\openone_{3}\!\left[\rho_{2}^{\prime}(\lambda^{*})\right]\\
 & =\frac{1}{2}\left(\begin{array}{cccccc}
\frac{1+\lambda^{*}}{2} & 0 & 0 & 0 & \frac{(1+2\lambda^{*})}{\chi^{\alpha}} & 0\\
0 & \frac{1-\lambda^{*}}{6} & 0 & 0 & 0 & 0\\
0 & 0 & \frac{1-\lambda^{*}}{3} & 0 & 0 & 0\\
0 & 0 & 0 & \frac{1-\lambda^{*}}{6} & 0 & 0\\
\frac{(1+2\lambda^{*})}{\chi^{\alpha}} & 0 & 0 & 0 & \frac{1+\lambda^{*}}{2} & 0\\
0 & 0 & 0 & 0 & 0 & \frac{1-\lambda^{*}}{3}
\end{array}\right),\nonumber 
\end{align}
where $\chi=\frac{1}{2}\left[\left(\frac{1+2\lambda^{*}}{\lambda^{*}}\right)^{1/\alpha}+3^{1/\alpha}\right]$.
\end{enumerate}
In order to explicitly demonstrate the correctness of the above construction
for the ENM model, we compute the resulting states at any time $t\ge0$:
$\rho_{1}(t):=\Lambda_{t}\otimes\openone_{3}[\rho_{1}]$ and $\rho_{2}(t):=\Lambda_{t}\otimes\openone_{3}[\rho_{1}]$,
whose analytic expressions we skip here due to their cumbersome form.
Yet, we explicitly write the resulting trace-distance between them:
\begin{align}
 & \big\|\rho_{1}(t)-\rho_{2}(t)\big\|_{1}=2\lambda^{*}\times\label{eq:trace_dist_ENM}\\
 & \times\begin{cases}
e^{-2\alpha c(t-t^{*})} & t\le t^{*}\\
\frac{1}{4}\left[3+e^{-2\alpha c(t-t^{*})}\left(\mathcal{R}(t,t^{*})-1\right)+\mathcal{R}(t,t^{*})\right] & t^{*}<t\le t^{\uparrow}\\
1 & t>t^{\uparrow}
\end{cases},\nonumber 
\end{align}
with $t^{\uparrow}>t^{*}$ being determined by the solution to the
transcendental equation $\mathcal{R}(t^{\uparrow},t^{*})=1$, where
\begin{equation}
\mathcal{R}(t,t^{*}):=\frac{\left(\cosh(ct^{*})\,\text{sech}(ct)\right)^{-\alpha}}{\cosh(\alpha c(t-t^{*}))}.
\end{equation}
If such solution does not exist (apart from the trivial $\mathcal{R}(t^{*},t^{*})=1$),
then $t^{\uparrow}\to\infty$ and the only non-smooth behavior of
the trace-distance occurs at $t=t^{*}$, which is the crucial one
indicating the non-Markovianity of the evolution.

In particular, after computing the right derivative of Eq.~\eqref{eq:trace_dist_ENM}
at $t\to t_{+}^{*}$, we may explicitly evaluate Eq.~\ref{eq:negativity_grows}
for the ENM model, as follows
\begin{align}
\left.\frac{\mathrm{d}}{\mathrm{d}t}E^{AB|C}\left(\tau_{t}^{ABC}\right)\right|_{t=t_{+}^{*}} & =\lim_{\delta t\to0}\frac{1}{2}\left.\frac{\mathrm{d}}{\mathrm{d}t}\big\|\rho_{1}(t)-\rho_{2}(t)\big\|_{1}\right|_{t=t^{*}+\delta t}\nonumber \\
 & =\frac{1}{2}\alpha c\lambda^{*}\tanh(ct^{*})>0,
\end{align}
which consistently is positive for any $t^{*}>0$ (due to $\lambda^{*}>0$
in Eq.~\eqref{eq:lambda_star} and $\forall_{x>0}:\;\tanh(x)>0$).

In Fig.1 of the main text, in order to more directly show the increasing
behavior of the negativity (20) as a non-Markovianity witness, we
plot rather the full dynamical behaviour of the above $E^{AB|C}$
as a function of $t$, i.e.:
\begin{equation}
E^{AB|C}=\frac{1}{2}\big\|\rho_{1}(t)-\rho_{2}(t)\big\|_{1}=\frac{1}{2}\big\|\,\Lambda_{t}\otimes\openone_{3}[\rho_{1}-\rho_{2}]\,\big\|_{1},
\end{equation}
for particular values of the ENM parameters $\alpha\ge1$ and $c>0$.
We, however, choose different values of $t^{*}>0$ to show that non-Markovianity
can be witnessed this way at arbitrary timescales.
\end{document}